\newcommand{\setN}{\mathbb{N}}
\newcommand{\setC}{\mathbb{C}}
\DeclareMathOperator{\tr}{tr}
\DeclareMathOperator{\id}{id}
\newtheorem{prop}{Proposition}
\theoremstyle{remark}
\newtheorem{example}{Example}
\begin{document}
\title{Many unstable particles from an open quantum systems
  perspective}

\author{Kordian Andrzej Smoli\'nski}
\email{K.A.Smolinski@merlin.phys.uni.lodz.pl}
\affiliation{Department of Theoretical Physics, 
  Faculty of Physics and Applied Computer Sciences \\
  University of \L\'od\'z,  ul. Pomorska 149/153, 90-236 \L\'od\'z, Poland}

\begin{abstract}
  We postulate a master equation, written in the language of creation
  and annihilation operators, as a candidate for unambiguous quantum
  mechanical description of unstable particles.  We have found Kraus
  representation for the evolution driven by this master equation and
  study its properties.  Both Schr\"odinger and Heisenberg picture of
  the system evolution are presented.  We show that the resulting time
  evolution leads to exponential decay law.  Moreover, we analyse
  mixing of particle flavours and we show that it can lead to flavour
  oscillation phenomenon.
\end{abstract}
\pacs{03.65.Yz, 03.65.Ca, 02.50.Ga}

\maketitle



\section{Introduction}
\label{sec:intro}

\noindent 
One of important difficulties in quantum mechanical description of
unstable particles is irreversibility of time-evolution.  The complete
system consists of decaying particle as well as of decay products.
Only this complete system undergoes unitary evolution, described by
quantum field theory.  However, in many applications of quantum
mechanics (e.g., in analysis of correlations experiments) we would
like to neglect the evolution of decay products and consider the
decaying particles only.  This is usually achieved by introducing a
non-Hermitian Hamiltonian, as it was done in the classical works of
Weisskopf and Wigner \cite{weisskopf30a,weisskopf30b}.  The
non-Hermitian Hamiltonian, however, leads inevitably to the
non-conservation of trace of the density operator of the system.
Although such description gives decreasing probability of detecting
the particle, it does not provide unambiguous way of calculating the
probability of finding the system consisting of a few such particles
in a given state after the measurement (see e.g.\ \cite{durt13} for a
discussion of other ambiguities caused by various approaches to the
description of such a system).  Indeed, one needs to use probability
theory rather than quantum mechanics for this purpose.  Since the
calculations of conditional probabilities are extremely important for
analysis of correlation experiments, especially of those done for
systems of neutral kaons \cite{uchiyama97,bramon99,foadi99a,dalitz01,
  bertlmann01,bertlmann01a,bramon02,bertlmann03,genovese04} or
$B$-mesons \cite{go04,bramon05a,go07:abb}, it would be desirable to
find a quantum mechanical description of decaying particles preserving
unit trace and positivity
\cite{benatti96,benatti97a,benatti97b,benatti98,benatti98a,
  benatti98b:full,benatti01} of the density operator for the system.
Recent papers (e.g.\ \cite{ellis96,bernabeu13}) show that there is
still a great interest in the unambiguous quantum-mechanical
description of neutral kaon system.

This is the point where the theory of dynamical semigroups and open
quantum systems \cite{kossakowski72,gorini76,lindblad76} can be
helpful.  Let us recall (cf.\
\cite{kossakowski98:full,alicki01,breuer02}) that the dynamical
semigroup in the Schr\"odinger picture is a one-parameter family of
linear maps $\Lambda_t^*$, acting on the space of trace class
operators on Hilbert space of the system, preserving for every $t \geq
0$: (i)~positivity, (ii)~trace, (iii)~strong continuity and such that
(iv)~$\Lambda_{t_1}^* \Lambda_{t_2}^* = \Lambda_{t_1 + t_2}^*$ for
every $t_1, t_2 \geq 0$.  These properties can be translated into
Heisenberg picture as requirements for the map $\Lambda_t$ acting on
the space of bounded operators on the Hilbert space of the system,
which for every $t \geq 0$: (i)~preserves the positive cone,
(ii)~leaves the identity operator invariant, (iii)~is continuous on
states in the trace-norm sense, (iv)~is normal and (v)~$\Lambda_{t_1}
\Lambda_{t_2} = \Lambda_{t_1 + t_2}$ for $t_1, t_2 \geq 0$.

The idea that the theory of open quantum system would be useful for
the description of unstable particles appeared quite early
\cite{alicki77,alicki78,weron85} (see also \cite{alicki07:full} for a
review).  Recently, the open quantum system approach was also applied
to the systems of particles with flavour oscillations (like in the
case of neutral kaons) \cite{caban05a,bertlmann06}, and it has been
used successfully in the description of EPR correlations and evolution
of entanglement in $K^0 \overline{K}^0$ system
\cite{caban06a:abb,caban07}.

Here, we follow the approach presented in
\cite{caban05a,caban06a:abb,caban07}.  However, in these works the
considerations were restricted to systems of at most two particles,
and transition from one-particle to two-particle theory was done by
means of tensor product construction.  In the present paper we will
show that it is possible to describe systems with arbitrary number of
particles using the second quantization formalism, which is the most
natural language for system with varying number of particles.
Moreover, such approach would be an advantage if we study the
behaviour of the system from uniformly moving or accelerated frame,
due to the well established transformation properties of annihilation
and creation operators.

The paper is organized as follows.  In Sect.~\ref{sec:schroedinger},
we postulate a master equation in Schr\"odinger picture for a single
kind of free particles, and then find the solution of this equation in
the form of the Kraus representation of the evolution of the density
operator of the system.  The next section is devoted to the Heisenberg
picture of the evolution of the same system.  In
Sect.~\ref{sec:many-types}, we analyse the system of particles of
different types and the flavour oscillation phenomenon.

\section{Schr\"odinger Picture}
\label{sec:schroedinger}

\noindent
In \cite{caban05a} it was shown that the time evolution of a free
unstable scalar particle can be described by a master equation in the
Lindblad--Gorini--Kossakowski--Sudarshan form
\cite{lindblad76,gorini76}:
\begin{equation}
  \label{eq:1}
  \frac{d\rho(t)}{dt} = -i [H, \rho(t)] + \{K, \rho(t)\} 
  + L \rho(t) L^\dagger\,,
\end{equation}
where
\begin{equation}
  \label{eq:2}
  H = m \ket{1} \bra{1}\,, \quad
  L = \sqrt{\Gamma} \ket{0} \bra{1}\,, \quad
  K = -\frac{1}{2} L^\dagger L\,.
\end{equation}
Here $\ket{1}$ denotes the state of presence of the particle and
$\ket{0}$ denotes the state of its absence; $m$ is the mass of the
particle and $\Gamma$ is its decay width.  Despite the fact that the
state $\ket{0}$ is usually called vacuum, it is not the vacuum in the
sense of quantum field theory, but rather in the sense used in
\cite{caban02}, i.e., it is an absence of a particle.  This equation
leads to the probability density of finding the particle evolving
according to the Geiger--Nutall exponential law.

However, the most natural quantum-mechanical description of systems
with variable number of particles is the second quantization
formalism.  For systems governed by \eqref{eq:1} the transition to
second quantization is straightforward, since the operators
\eqref{eq:2} can be interpreted as the vacuum--one-particle sector of
the second quantized operators
\begin{equation}
  \label{eq:3}
  \hat{H} = m \hat{a}^\dagger \hat{a} \equiv m \hat{N}\,, \quad
  \hat{L} = \sqrt{\Gamma} \hat{a}\,,
\end{equation}
where $\hat{a}$ and $\hat{a}^\dagger$ are bosonic annihilation and
creation operators, respectively:
\begin{equation}
  \label{eq:4}
  [\hat{a}, \hat{a}^\dagger] = 1\,, \quad
  \hat{a} \ket{0} = 0\,, \quad
  \ket{n} = \frac{(\hat{a}^\dagger)^n}{\sqrt{n!}} \ket{0}\,;
\end{equation}
vectors $\ket{n}$ form the so-called occupation number basis.  With
this substitution we have
\begin{displaymath}
  \hat{K} = -\frac{1}{2} \Gamma \hat{N}\,.
\end{displaymath}
If we substitute~\eqref{eq:3} into \eqref{eq:1}, we arrive to the
master equation in the form
\begin{subequations}
  \label{eq:5}
  \begin{gather}
    \frac{d}{dt} \Lambda_t^*\rho = \mathcal{L}^*(\Lambda_t^*\rho)\,,
    \intertext{where} %
    \mathcal{L}^*(\rho) = -i m [\hat{N}, \rho] - \frac{\Gamma}{2}
    \{\hat{N}, \rho\} + \Gamma \hat{a} \rho \hat{a}^\dagger\,.
  \end{gather}
\end{subequations}
In the above we can recognize the equation introduced and studied in
\cite{alicki77,alicki78,alicki01,alicki07:full}.  Similar equations
lead to evolution given by quasi-free semigroups, see eg.\ \cite{demoen77,
  evans77,vanheuverzwijn78,evans79,alicki07:full,blanchard07,blanchard09,
  blanchard09:corr,hellmich10}
and are also studied in the context of quantum optics
\cite{olkiewicz08a,olkiewicz08b,ma08,olkiewicz09}.

If we write down explicitly annihilation and creation operators in
occupation number basis, namely $\hat{a}_{kl} = \sqrt{k+1}
\delta_{k+1,l}$, $\hat{a}^\dagger_{kl} = \sqrt{l+1} \delta_{k,l+1}$,
so $\hat{N}_{kl} = k \delta_{kl}$ ($k, l = 0, 1,\ldots$), then we can
view \eqref{eq:5} as the following infinite system of equations for
matrix elements of the density operator $\Lambda_t^* \rho \equiv
\sum_{kl} \rho_{kl}(t) \ket{k}\bra{l}$:
\begin{equation}
  \label{eq:6}
  \frac{d\rho_{kl}(t)}{dt} = \left[i (k - l) m - \tfrac{1}{2} (k +l)
    \Gamma\right] \rho_{kl}(t) 
  + \sqrt{(k + 1) (l + 1)} \Gamma \rho_{k+1,l+1}(t)\,,
\end{equation}
for $k, l = 0, 1,\ldots$\,.  Thus we get infinite, in principle,
system of linear differential equations of first order.

Notice, that the system~\eqref{eq:6} seems to be highly non-trivial
--- the solution for $\rho_{kl}(t)$ depends on a solution for
$\rho_{k+1,l+1}(t)$, what apparently leads to infinite chain of
dependencies.  What makes the system~\eqref{eq:6} solvable is the
proper choice of initial conditions.  Indeed, for every reasonable
initial physical state, the number of particles must be finite, so all
matrix elements of $\rho(0)$ corresponding to higher number of
particles must vanish.  Mathematically, it means that there exist
indices $r$ and $s$, such that
\begin{equation}
  \label{eq:7}
  \rho_{kl}(0) = 0 \text{ for }  k > r \text{ and } l > s\,.
\end{equation}
One can easily check that the system~\eqref{eq:6} with initial
condition~\eqref{eq:7} gives us the well posed Cauchy's problem.

Now, instead directly solving the equation~\eqref{eq:5} for some
interesting choices of initial state we concentrate on finding and
studying the Kraus representation \cite{kraus83} of the evolution of
the system.

Although Kraus operators for the evolution of the density operator
governed by the master equation~\eqref{eq:5} was found
in~\cite{liu04}, here we give its another formulation, now written in
terms of annihilation/creation operators.  It can be easily checked
that these two choices of Kraus operators coincide up to the phase
factors.  Despite this, we give the formal proof that proposed Kraus
operators lead to the evolution of the system undergoing the
equation~\eqref{eq:5}, since we will employ the technique used in the
proof later on.
\begin{prop}
  \label{prop:1}
  If for $k = 0, 1,\ldots$:
  \begin{equation}
    \label{eq:8}
    E_k(t) = \frac{1}{\sqrt{k!}} e^{-i \hat{M} t} 
    \left(\sqrt{1 - e^{-\Gamma t}} \hat{a}\right)^k\,, 
  \end{equation}
  where
  \begin{math}
    \hat{M} = (m - \frac{i}{2} \Gamma) \hat{N}\,,
  \end{math}
  then
  \begin{equation}
    \label{eq:9}
    \Lambda_t^* \rho = \sum_{k=0}^\infty E_k(t) \rho E_k^\dagger(t)
  \end{equation}
  is the solution of the master equation~\eqref{eq:5}, where $\rho$ is
  the density operator given at initial time $t = 0$.
\end{prop}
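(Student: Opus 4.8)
The plan is to treat \eqref{eq:9} as the definition of a family of maps $\Phi_t(\rho) := \sum_{k} E_k(t)\rho E_k^\dagger(t)$ and to verify directly that it solves the Cauchy problem posed by \eqref{eq:5}: the initial condition $\Phi_0(\rho)=\rho$ together with $\frac{d}{dt}\Phi_t(\rho)=\mathcal{L}^*(\Phi_t(\rho))$. Uniqueness for this well-posed problem (guaranteed by initial conditions of the type \eqref{eq:7}) then identifies $\Phi_t$ with $\Lambda_t^*$. Writing $g(t)=\sqrt{1-e^{-\Gamma t}}$ and $c=m-\tfrac{i}{2}\Gamma$ so that $\hat{M}=c\hat{N}$, I note first that $g(0)=0$ annihilates every term with $k\geq 1$, while the $k=0$ term reduces to $E_0(0)=\id$; hence $\Phi_0(\rho)=\rho$. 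Since any admissible initial state has bounded particle number, $\hat{a}^k\rho(\hat{a}^\dagger)^k$ vanishes for all sufficiently large $k$, so the sum in \eqref{eq:9} is effectively finite and term-by-term differentiation requires no further justification.

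The computational core rests on two elementary identities. Differentiating $E_k(t)=\frac{1}{\sqrt{k!}}e^{-i\hat{M}t}g(t)^k\hat{a}^k$ gives
\begin{equation}
  \label{eq:plan-dot}
  \dot{E}_k(t) = -i\hat{M}\,E_k(t) + k\,\frac{g'(t)}{g(t)}\,E_k(t),
\end{equation}
the second term coming from the derivative of the scalar weight $g^k$. The relation $[\hat{N},\hat{a}]=-\hat{a}$ yields $\hat{a}\,f(\hat{M})=f(\hat{M}+c)\,\hat{a}$ for any function $f$, and applied to the exponential this produces the shift identity
\begin{equation}
  \label{eq:plan-shift}
  \hat{a}\,E_k(t) = e^{-imt-\Gamma t/2}\,\frac{\sqrt{k+1}}{g(t)}\,E_{k+1}(t).
\end{equation}

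I would then differentiate $\Phi_t(\rho)$ term by term and insert \eqref{eq:plan-dot}. Because $g'/g$ is real, the two halves of the product rule combine into
\begin{equation}
  \label{eq:plan-collect}
  \frac{d}{dt}\Phi_t(\rho) = -i\hat{M}\,\Phi_t(\rho) + i\,\Phi_t(\rho)\,\hat{M}^\dagger
  + 2\,\frac{g'(t)}{g(t)}\sum_k k\,E_k(t)\rho E_k^\dagger(t).
\end{equation}
Expanding $\hat{M}=c\hat{N}$ turns the first two terms into $-im[\hat{N},\Phi_t(\rho)]-\frac{\Gamma}{2}\{\hat{N},\Phi_t(\rho)\}$, exactly the Hamiltonian and anticommutator parts of $\mathcal{L}^*$. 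It remains to recognize the last sum as the gain term: using \eqref{eq:plan-shift} and its adjoint one finds $\hat{a}\,E_k\rho E_k^\dagger\,\hat{a}^\dagger=e^{-\Gamma t}\frac{k+1}{g^2}E_{k+1}\rho E_{k+1}^\dagger$, so that after the index shift $k\to k+1$, $\Gamma\,\hat{a}\,\Phi_t(\rho)\,\hat{a}^\dagger=\frac{\Gamma e^{-\Gamma t}}{1-e^{-\Gamma t}}\sum_k k\,E_k\rho E_k^\dagger$. Since $2g'/g=\Gamma e^{-\Gamma t}/(1-e^{-\Gamma t})$, the last term of \eqref{eq:plan-collect} is precisely $\Gamma\,\hat{a}\,\Phi_t(\rho)\,\hat{a}^\dagger$, completing the identification $\frac{d}{dt}\Phi_t(\rho)=\mathcal{L}^*(\Phi_t(\rho))$.

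The one genuinely delicate point is this final matching: the whole argument hinges on the algebraic coincidence that the derivative of the weights $g(t)^{2k}$, after re-indexing, reproduces exactly the coefficient $\Gamma e^{-\Gamma t}/(1-e^{-\Gamma t})$ supplied by the dissipator. I expect the bookkeeping of the $k$-dependent factors in \eqref{eq:plan-shift} --- tracking $\sqrt{k+1}/g$ and the phase $e^{-imt-\Gamma t/2}$ through the sandwich $E_k\rho E_k^\dagger$ --- to be where slips are easiest, rather than any conceptual obstacle, and care is needed throughout to preserve the order of the noncommuting factors $\hat{M}$ and $\hat{a}$.
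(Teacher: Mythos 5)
Your proposal is correct and follows essentially the same route as the paper's proof: term-by-term differentiation of the Kraus sum, the derivative identity $\dot{E}_k = (-i\hat{M} + \tfrac{k}{2}\tfrac{\Gamma e^{-\Gamma t}}{1-e^{-\Gamma t}})E_k$, splitting $\hat{M}$ into the commutator and anticommutator parts, and re-indexing via the recurrence $\hat{a}E_k \propto E_{k+1}$ to recover the gain term $\Gamma\hat{a}(\Lambda_t^*\rho)\hat{a}^\dagger$. Your derivation is in fact slightly cleaner in one spot --- you obtain the closed-form derivative directly from the scalar weight $g(t)^k$, whereas the paper reaches the same equation through the intermediate relations \eqref{eq:11} and \eqref{eq:12} --- and your explicit remarks on finiteness of the sum and uniqueness of the Cauchy problem are sound additions the paper leaves implicit.
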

\begin{proof}
  First, let us note that for any $k$
  \begin{equation}
    \label{eq:10}
    E_k(t) \hat{a} = e^{(i m + \frac{1}{2} \Gamma) t} \hat{a} E_k(t)\,,
  \end{equation}
  what follows immediately from canonical commutation relations.
  Using~\eqref{eq:10} we can show by straightforward calculation that
  \begin{subequations}
    \label{eq:11}
    \begin{align}
      \label{eq:11a}
      \frac{dE_0(t)}{dt} &= -i \hat{M}
      E_0(t)\,, \\
      \label{eq:11b}
      \frac{dE_k(t)}{dt} &= -i \hat{M} E_k(t) + \frac{\sqrt{k} \Gamma
        e^{(i m + \frac{1}{2} \Gamma) t}}{2 \sqrt{1 - e^{-\Gamma t}}}
      \hat{a} E_{k-1}(t)
    \end{align}
    (for $k = 1, 2,\ldots$).
  \end{subequations}
  
  Next, one can easily check that the following recurrence relations
  hold for $k = 1, 2,\ldots$
  \begin{equation}
    \label{eq:12}
    E_k(t) = \frac{\sqrt{1 - e^{-\Gamma t}}}{\sqrt{k}} 
    e^{(i m + \frac{1}{2} \Gamma) t} \hat{a} E_{k-1}(t)\,.
  \end{equation}
  Combining~\eqref{eq:12} and~\eqref{eq:11b} we can
  write~\eqref{eq:11} in the form
  \begin{equation}
    \label{eq:13}
    \frac{dE_k(t)}{dt} = \left(-i \hat{M} 
    + \frac{k}{2} \frac{\Gamma e^{-\Gamma t}}{1 - e^{-\Gamma t}}\right) E_k(t)\,,
  \quad k = 0, 1,\ldots
  \end{equation}
  Now, let us compute the time derivative of the density operator
  $\Lambda_t^*\rho$ given by~\eqref{eq:9}:
  \begin{multline}
    \label{eq:14}
    \frac{d}{dt}\Lambda_t^*\rho = \sum_{k=0}^\infty
    \left(\frac{dE_k(t)}{dt} \rho E_k^\dagger(t) + E_k(t) \rho
      \frac{dE_k^\dagger(t)}{dt}\right) \\
    = -i \left[\hat{M} (\Lambda_t^*\rho) - (\Lambda_t^*\rho)
      \hat{M}^\dagger\right] 
    + \frac{\Gamma e^{-\Gamma t}}{1 - e^{-\Gamma t}} \sum_{k=1}^\infty
    k E_k(t) \rho E_k^\dagger(t)\,.
  \end{multline}
  Taking into account~\eqref{eq:12} the last term in~\eqref{eq:14} can be
  written as
  \begin{equation}
    \label{eq:15}
    \frac{\Gamma e^{-\Gamma t}}{1 - e^{-\Gamma t}} \sum_{k=1}^\infty k
    E_k(t) \rho E_k^\dagger(t) 
    = \Gamma \hat{a} \sum_{k=1}^\infty E_{k-1}(t) \rho
    E_{k-1}^\dagger(t) \hat{a}^\dagger = \Gamma \hat{a}
    (\Lambda_t^*\rho) \hat{a}^\dagger\,.
  \end{equation}
  Thus, the density operator obeys the master equation~\eqref{eq:5}.

  To complete the proof, we have to show that at the time $t = 0$ the
  density operator $\Lambda_t^*\rho$ given by~\eqref{eq:9} is $\rho$.
  This is trivial point, because obviously $E_0(0) = \id$ and $E_k(0)
  = 0$ for $k = 1, 2, \ldots$
\end{proof}
\noindent
It is easy to see that after writing out annihilation operators in
occupation number basis the Kraus operators~\eqref{eq:8} differ only
by phase factors from those found in \cite{chuang96,liu04}.  These
phase factors become important when you try to study a flavour
oscillation phenomenon (see Sect.~\ref{sec:many-types}).

\begin{prop}
  \label{prop:2}
  If $E_k(t)$ are given by~\eqref{eq:8}, then
  \begin{equation}
    \label{eq:16}
    \sum_{k=0}^\infty E_k^\dagger(t) E_k(t) = \id\,.
  \end{equation}
\end{prop}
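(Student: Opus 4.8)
The plan is to verify the identity by letting both sides act on the occupation-number basis vectors $\ket{n}$; this diagonalises everything and reduces the operator identity to a scalar identity for each $n$. First I would form $E_k^\dagger(t) E_k(t)$. Since $\hat{M} = (m - \frac{i}{2}\Gamma)\hat{N}$ is a function of $\hat{N}$ alone, the two diagonal exponentials that meet in the middle of the product commute and combine, the real mass contribution cancelling, to give $e^{i\hat{M}^\dagger t} e^{-i\hat{M}t} = e^{-\Gamma t \hat{N}}$. Keeping this factor sandwiched between the creation and annihilation strings yields
\begin{equation}
  E_k^\dagger(t) E_k(t) = \frac{(1 - e^{-\Gamma t})^k}{k!} (\hat{a}^\dagger)^k e^{-\Gamma t \hat{N}} \hat{a}^k\,.
\end{equation}

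Next I would evaluate this operator on $\ket{n}$. Using $\hat{a}^k \ket{n} = \sqrt{n!/(n-k)!}\,\ket{n-k}$ for $n \geq k$ (and zero otherwise), the diagonal factor contributes $e^{-\Gamma t(n-k)}$, and the subsequent $(\hat{a}^\dagger)^k$ restores $\ket{n}$ with the same normalisation factor, producing the square and hence a binomial coefficient. The result is
\begin{equation}
  E_k^\dagger(t) E_k(t) \ket{n} = \binom{n}{k} (1 - e^{-\Gamma t})^k (e^{-\Gamma t})^{n-k} \ket{n}
\end{equation}
for $0 \leq k \leq n$, and the null vector for $k > n$.

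Finally, summing over $k$ the series terminates at $k = n$, so no convergence subtlety arises, and the binomial theorem gives $\sum_{k=0}^{n} \binom{n}{k}(1 - e^{-\Gamma t})^k (e^{-\Gamma t})^{n-k} = \bigl((1 - e^{-\Gamma t}) + e^{-\Gamma t}\bigr)^n = 1$. Hence $\sum_{k} E_k^\dagger(t) E_k(t) \ket{n} = \ket{n}$ for every $n$, and since the $\ket{n}$ span the Hilbert space, the operator identity~\eqref{eq:16} follows. The only points that need care are the normal-ordering step that produces the binomial coefficient and the cancellation of the mass term in $e^{i\hat{M}^\dagger t} e^{-i\hat{M}t}$; once these are in hand the result is immediate, and because the sum on each $\ket{n}$ is finite, interchanging the summation with the action on basis vectors is trivially justified.
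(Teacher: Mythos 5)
Your proof is correct and follows essentially the same route as the paper's: compute $E_k^\dagger(t)E_k(t) = \frac{1}{k!}(1-e^{-\Gamma t})^k(\hat{a}^\dagger)^k e^{-\Gamma \hat{N} t}\hat{a}^k$, act on each $\ket{n}$ to obtain the binomial-distribution eigenvalue $\binom{n}{k}(1-e^{-\Gamma t})^k e^{-(n-k)\Gamma t}$, and sum via the binomial theorem. The additional remarks on the cancellation of the mass term and the finiteness of the sum on each basis vector are accurate and harmless.
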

\begin{proof}
  We start with the observation that for any element of occupation
  number basis $\ket{n}$ and any non-negative integer $k$
  \begin{subequations}
    \label{eq:17}
    \begin{align}
      \label{eq:17a}
      \hat{a}^k \ket{n} &= 
      \begin{cases}
        \displaystyle
        \sqrt{\frac{n!}{(n-k)!}} \ket{n-k}\,, & n \geq k\,, \\
        0\,, & n < k\,,
      \end{cases}
      \intertext{and} %
      (\hat{a}^\dagger)^k \ket{n} &= \sqrt{\frac{(n+k)!}{n!}}
      \ket{n+k}\,.
    \end{align}
  \end{subequations}
  Since
  \begin{displaymath}
    E_k^\dagger(t) E_k(t) = \frac{1}{k!} \left(1 - e^{-\Gamma t}\right)^k 
    (\hat{a}^\dagger)^k e^{-\Gamma \hat{N} t} \hat{a}^k\,,
  \end{displaymath}
  then, from~\eqref{eq:17}, for any element of the occupation number
  basis
  \begin{equation}
    \label{eq:18}
    E_k^\dagger(t) E_k(t) \ket{n} =
    \binom{n}{k} \left(1 - e^{-\Gamma t}\right)^k e^{-(n-k) \Gamma
      t} \ket{n}\,,
  \end{equation}
  when $n \geq k$, and $E_k^\dagger(t) E_k(t) \ket{n} = 0$, when $n <
  k$.  Thus,
  \begin{equation}
    \label{eq:19}
    \sum_{k=0}^\infty E_k^\dagger(t) E_k(t) \ket{n} = \sum_{k=0}^n
    E_k^\dagger(t) E_k(t) \ket{n} 
    = \sum_{k=0}^n \binom{n}{k} \left(1 - e^{-\Gamma t}\right)^k
    e^{-(n-k) \Gamma t} \ket{n} = \ket{n}\,.
  \end{equation}
  Since $\sum_{k=0}^\infty E_k^\dagger(t) E_k(t)$ acts as identity on
  any element of the basis, it must be the identity operator.
\end{proof}

\begin{prop}
  \label{prop:3}
  Vacuum state $\ket{0} \bra{0}$ is stable under the evolution given
  by~\eqref{eq:8} and~\eqref{eq:9}.  Moreover, $\lim_{t \to
    \infty}\Lambda_t^* \rho = \ket{0} \bra{0}$ for any density
  operator $\rho$.
\end{prop}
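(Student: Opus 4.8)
The plan is to prove both assertions by direct computation with the Kraus operators~\eqref{eq:8}, reusing the action~\eqref{eq:17} of powers of $\hat{a}$ on the occupation-number basis.

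First I would settle stability. Acting with each $E_k(t)$ on $\ket{0}$ and using $\hat{a}^k\ket{0}=0$ for $k\ge1$ from~\eqref{eq:17a}, only the $k=0$ term survives in~\eqref{eq:9}. Since $E_0(t)=e^{-i\hat{M}t}$ and $\hat{M}\ket{0}=0$ (because $\hat{N}\ket{0}=0$), we get $E_0(t)\ket{0}=\ket{0}$, whence $\Lambda_t^*(\ket{0}\bra{0})=\ket{0}\bra{0}$ for every $t\ge0$.

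For the limit I would compute a generic matrix element. Writing $\rho=\sum_{a,b}\rho_{ab}\ket{a}\bra{b}$ and combining~\eqref{eq:17a} with $e^{-i\hat{M}t}\ket{j}=e^{-imjt}e^{-\frac12\Gamma jt}\ket{j}$, the factor $\bra{p}E_k(t)\ket{a}\,\bra{b}E_k^\dagger(t)\ket{q}$ is nonzero only when $a=p+k$ and $b=q+k$, which yields
\[
\bra{p}\Lambda_t^*\rho\ket{q}
= e^{-im(p-q)t}\,e^{-\frac12\Gamma(p+q)t}
\sum_{k=0}^\infty\frac{(1-e^{-\Gamma t})^k}{k!}
\sqrt{\frac{(p+k)!\,(q+k)!}{p!\,q!}}\,\rho_{p+k,q+k}\,.
\]
Letting $t\to\infty$ then finishes the argument: for $(p,q)=(0,0)$ the prefactors equal $1$ and the sum reduces to $\sum_{k}\rho_{kk}(1-e^{-\Gamma t})^k\to\sum_k\rho_{kk}=\tr\rho=1$, while for every other pair the factor $e^{-\frac12\Gamma(p+q)t}$ drives the (finite) sum to $0$. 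Hence $\Lambda_t^*\rho\to\ket{0}\bra{0}$.

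The one point needing care is upgrading entrywise convergence to convergence of the operator, and this is exactly where the finite-particle hypothesis~\eqref{eq:7} is used. Because $\rho_{p+k,q+k}=0$ unless $p\le r$ and $q\le s$, each sum over $k$ is finite and $\Lambda_t^*\rho$ is supported, for all $t$, on the fixed finite block $\{0,\dots,r\}\times\{0,\dots,s\}$; on a finite-dimensional block entrywise convergence coincides with trace-norm convergence, so no interchange of limit and summation is in question. I expect this to be the only (and rather mild) obstacle: for a state of genuinely unbounded particle number one would instead need a dominated-convergence estimate to pass the limit inside the sum, which invoking~\eqref{eq:7} lets us sidestep entirely.
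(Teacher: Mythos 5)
Your proof is correct and takes essentially the same route as the paper's: both act with the explicit Kraus operators on the occupation-number basis and pass to the limit $t \to \infty$ termwise, your matrix-element formula for $\bra{p}\Lambda_t^*\rho\ket{q}$ being exactly the paper's relation~\eqref{eq:24}. The only substantive difference is that you explicitly justify interchanging the limit with the infinite sum by invoking the finite-particle condition~\eqref{eq:7} --- a point the paper's proof passes over silently --- though for the decisive $(0,0)$ entry monotone convergence of $\sum_k \rho_{kk}(1 - e^{-\Gamma t})^k$ would let you keep the statement for arbitrary $\rho$ as claimed.
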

\begin{proof}
  Indeed, $E_0(t) \ket{0} = \ket{0}$ and $E_k(t) \ket{0} = 0$ for $k =
  1, 2, \ldots$, so the density operator $\ket{0} \bra{0}$ is stable
  during the time evolution.

  For the proof of the second statement, one can find that
  \begin{equation}
    \label{eq:20}
    E_k(t) \ket{n} =
    \sqrt{\binom{n}{k}} e^{-(i m + \frac{1}{2} \Gamma) (n - k) t} 
    \left(\sqrt{1 - e^{-\Gamma t}}\right)^k \ket{n - k}\,,
  \end{equation}
  when $ n \geq k$, and vanishes otherwise.  Next, observe that for
  $\Gamma > 0$
  \begin{equation}
    \label{eq:21}
    \lim_{t \to \infty} e^{-\frac{1}{2} \Gamma (n - k) t} 
    \left(\sqrt{1 - e^{-\Gamma t}}\right)^k =
    \begin{cases}
      0\,, & n > k\,, \\
      1\,, & n = k\,, \\
      \infty\,, & n < k\,,
    \end{cases}
  \end{equation}
  so,
  \begin{equation}
    \label{eq:22}
    \lim_{t \to \infty} E_k(t) \ket{n} =
    \begin{cases}
      0\,, & n \neq k\,, \\
      \ket{0}\,, & n = k\,,
    \end{cases}
  \end{equation}
  and, consequently,
  \begin{equation}
    \label{eq:23}
    \lim_{t \to \infty} E_k(t) \ket{n} \bra{n'} E_k^\dagger(t) =
    \begin{cases}
      0\,, & n \neq n'\,, \\
      \delta_{nk} \ket{0} \bra{0}\,, & n = n'\,.
    \end{cases}
  \end{equation}
  Thus, for any density operator $\rho$, $\lim_{t \to \infty}
  \Lambda_t^* \rho = \tr(\rho) \ket{0} \bra{0} = \ket{0} \bra{0}$.
\end{proof}

Explicit solutions of \eqref{eq:1} with operators \eqref{eq:3} can be
deduced from the relation
\begin{multline}
  \label{eq:24}
  \Lambda_t^* \ket{n} \bra{n'} = \sum_{k=0}^{\min\{n, n'\}}
  \sqrt{\binom{n}{k} \binom{n'}{k}} e^{-i m (n - n') t}\\
  \times e^{-\frac{1}{2} \Gamma (n + n' - 2 k) t} \left(1 - e^{-\Gamma
      t}\right)^k \ket{n - k} \bra{n' - k}\,.
\end{multline}
If we impose the superselection rule which forbids the superpositions
of states with different number of particles, then the density
operator for a system consisting of at most $n$ particles is of the
form
\begin{displaymath}
  \Lambda_t^*\rho = \sum_{k=0}^n p_k(t) \ket{k} \bra{k}\,, \quad
  \sum_{k=0}^n p_k(t) = 1\,.
\end{displaymath}
Thus, it is enough to solve the equation~\eqref{eq:1} for an initial
state of the form $\rho = \ket{n} \bra{n}$ for $n$ being some
non-negative integer (arbitrary, but finite), because any density
operator for the initial state is a linear combination of such states.

If the system is initially in the $n$-particle pure state, $\rho =
\ket{n} \bra{n}$, then the solution of the equation~\eqref{eq:1} is
\begin{displaymath}
  \Lambda_t^*\rho = \sum_{k=0}^n \binom{n}{k} e^{-(n-k) \Gamma t} 
  \left(1 - e^{-\Gamma t}\right)^k 
  \ket{n-k} \bra{n-k}\,.
\end{displaymath}
The average number of particles changes in time as
\begin{displaymath}
  \left<N(t)\right> = \tr\left[(\Lambda_t^*\rho) \hat{N}\right] 
  = n e^{-\Gamma t}\,.
\end{displaymath}
We have thus recovered the Geiger--Nutall exponential decay law.  It
is worth noting that the probability that at a time $t$ one finds
exactly $k$ particles from initally $n$ ones, has a binomial
distribution $B(n, e^{-\Gamma t})$ with probability $e^{-\Gamma t}$ of
finding a single particle, as it can be expected.

\section{Heisenberg Picture}
\label{sec:heisenberg}
\noindent
In Heisenberg picture, master equation for the evolution of an
observable $\Omega$ is of the form
\begin{subequations}
  \label{eq:27}
  \begin{gather}
    \label{eq:27a}
    \frac{d}{dt} \Lambda_t \Omega = \mathcal{L}(\Lambda_t \Omega)\,, 
    \intertext{where} 
    \label{eq:27b}
    \mathcal{L}(\Omega) = i [H, \Omega] + \frac{1}{2}
    \left\{\left[L^\dagger, \Omega\right] L + L^\dagger \left[\Omega,
        L\right]\right\}\,.
  \end{gather}
\end{subequations}
Note that 
\begin{displaymath}
  \frac{1}{2} \{[L^\dagger, \Omega] L +
  L^\dagger [\Omega, L]\} = \{K, \Omega\} + L^\dagger
  \Omega L\,,
\end{displaymath}
but the form used in~\eqref{eq:27b} is usually more convenient when
performing calculations in the Heisenberg picture involving creation
and annihilation operators.

Having a family of Kraus operators~\eqref{eq:8}, the evolution of
observable $\Omega$ can be written as the series
\begin{equation}
  \label{eq:25}
  \Lambda_t \Omega = \sum_{k=0}^\infty E_k^\dagger(t) \Omega E_k(t)\,.
\end{equation}
This representation is especially useful if we can find the
decomposition of the observable into its matrix elements in occupation
basis:
\begin{equation}
  \label{eq:32}
  \Omega = \sum_{n,n'} \omega_{n,n'} \ket{n} \bra{n'}\,.
\end{equation}
\begin{multline}
  \label{eq:26}
  \Lambda_t \ket{n} \bra{n'} = \sum_{k=0}^\infty \sqrt{\binom{n +
      k}{n} \binom{n' + k}{n'}} e^{i m (n - n') t} \\
  \times e^{-\frac{1}{2} \Gamma (n + n') t} \left(1 - e^{-\Gamma
      t}\right)^k \ket{n + k} \bra{n' + k}\,.
\end{multline}
Using projectors onto $n$-particle states, $\hat{\Pi}_n \equiv \ket{n}
\bra{n}$, the last equation can be rewritten in a more convenient form
\begin{equation}
  \label{eq:38}
  \Lambda_t \hat{\Pi}_n = \frac{1}{\left(e^{\Gamma t} - 1\right)^n} 
  \sum_{k=n}^\infty \binom{k}{n} \left(1 - e^{-\Gamma t}\right)^k 
  \hat{\Pi}_k\,.
\end{equation}

\begin{prop}
  \label{prop:4}
  $\lim_{t \to \infty} \Lambda_t \hat{\Pi}_0 = \id$.
\end{prop}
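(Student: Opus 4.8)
The plan is to reduce everything to the already-derived formula~\eqref{eq:38}. Setting $n = 0$ there, and using $\binom{k}{0} = 1$ together with $(e^{\Gamma t} - 1)^0 = 1$, collapses the prefactor and gives the diagonal expression
\begin{equation}
  \Lambda_t \hat{\Pi}_0 = \sum_{k=0}^\infty \left(1 - e^{-\Gamma t}\right)^k \hat{\Pi}_k\,.
\end{equation}
Thus $\Lambda_t \hat{\Pi}_0$ is diagonal in the occupation number basis, acting on $\ket{k}$ as multiplication by $\mu_k(t) = (1 - e^{-\Gamma t})^k \in [0, 1]$. For every fixed $k$ one has $\mu_k(t) \to 1$ as $t \to \infty$, and since $\id = \sum_{k=0}^\infty \hat{\Pi}_k$ this already suggests the claimed limit.

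The step I expect to be the main obstacle is pinning down the topology in which the limit is to be understood, because the convergence is definitely \emph{not} uniform. Indeed, for any finite $t > 0$ one has $0 < 1 - e^{-\Gamma t} < 1$, so $\mu_k(t) \to 0$ as $k \to \infty$; the operator $\Lambda_t \hat{\Pi}_0 - \id$ therefore has eigenvalues $\mu_k(t) - 1$ accumulating at $-1$, whence $\|\Lambda_t \hat{\Pi}_0 - \id\| = 1$ for all $t$. Hence I would state the result in the strong (equivalently, in view of property~(iv) in the Introduction, the ultraweak) sense. For strong convergence, fix $\psi = \sum_n c_n \ket{n}$ with $\sum_n |c_n|^2 < \infty$ and compute $\|(\Lambda_t \hat{\Pi}_0 - \id)\psi\|^2 = \sum_n |c_n|^2 \left(1 - \mu_n(t)\right)^2$. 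Each summand tends to $0$ and is dominated by the summable sequence $|c_n|^2$, so dominated convergence yields $\|(\Lambda_t \hat{\Pi}_0 - \id)\psi\| \to 0$. This is the key estimate and the only place where a genuine interchange of limit and infinite sum must be justified.

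As an independent check and an alternative route, I would exploit the duality between the two pictures: by construction~\eqref{eq:25} is dual to~\eqref{eq:9}, so $\tr[\rho\,(\Lambda_t \hat{\Pi}_0)] = \tr[(\Lambda_t^* \rho)\,\hat{\Pi}_0]$ for every trace class $\rho$. Proposition~\ref{prop:3} gives $\Lambda_t^* \rho \to \ket{0}\bra{0}$, so the right-hand side tends to $\bra{0} \hat{\Pi}_0 \ket{0} = 1 = \tr[\rho\,\id]$ whenever $\tr\rho = 1$, and by linearity for all trace class $\rho$. This is precisely ultraweak convergence $\Lambda_t \hat{\Pi}_0 \to \id$, consistent with the strong-convergence argument above and with the physical reading that, asymptotically, every state is certain to contain zero particles.
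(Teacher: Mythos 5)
Your proof is correct and follows the same route as the paper: set $n=0$ in \eqref{eq:38}, obtain $\Lambda_t \hat{\Pi}_0 = \sum_{k=0}^\infty (1-e^{-\Gamma t})^k \hat{\Pi}_k$, and take the termwise limit to recover $\sum_k \hat{\Pi}_k = \id$. The paper stops there without specifying a topology; your observations that the convergence holds strongly (via dominated convergence) and ultraweakly (via duality with Proposition~\ref{prop:3}) but fails in operator norm are a genuine sharpening of the statement rather than a departure from the paper's argument.
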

\begin{proof}
  From~\eqref{eq:38} it follows that
  \begin{displaymath}
    \Lambda_t \hat{\Pi}_0 = \sum_{k=0}^\infty 
    \left(1 - e^{-\Gamma t}\right)^k \hat{\Pi}_k\,,
  \end{displaymath}
  so $\lim_{t \to \infty} \Lambda_t \hat{\Pi}_0 =
  \displaystyle\sum_{k=0}^\infty \hat{\Pi}_k \equiv \id$.
\end{proof}
\noindent
Physically, Proposition~\ref{prop:4} tells us that after substantially
long (mathematically infinite) period of time, the probability of
finding vacuum reaches one, irrespectively of the state of the system.
In other words, at infinite time all the Fock spaces collapse to the
vacuum subspace.

The evolution of creation and annihilation operators can be easily
find with use of relation~\eqref{eq:10}:
\begin{subequations}
  \label{eq:34}
  \begin{gather}
    \Lambda_t \hat{a} = e^{-(i m + \frac{1}{2} \Gamma) t} \hat{a}\,, \\
    \Lambda_t \hat{a}^\dagger = e^{(i m - \frac{1}{2} \Gamma) t} \hat{a}^\dagger\,.
  \end{gather}
\end{subequations}
Moreover, it is easy to check that in this case $\Lambda_t \hat{N} =
\Lambda_t \hat{a}^\dagger \Lambda_t \hat{a}$ (what, in general, does
not hold).  Indeed, using~\eqref{eq:10} we have
\begin{equation}
  \label{eq:48}
  \Lambda_t \hat{N} = \sum_{k=0}^\infty E_k^\dagger(t) \hat{a}^\dagger
  \hat{a} E_k(t)
  = \left(\sum_{k=0}^\infty E_k^\dagger(t) \hat{a}^\dagger
    E_k(t)\right) e^{-i (m + \frac{1}{2} \Gamma) t} \hat{a} =
  \Lambda_t \hat{a}^\dagger \Lambda_t \hat{a}\,.
\end{equation}
Consequently, the evolution of the particle number observable is
\begin{equation}
  \label{eq:44}
  \Lambda_t \hat{N} = e^{-\Gamma t} \hat{N}\,;
\end{equation}
we can get this result by solving~\eqref{eq:27a} for $\hat{N}$, too.

It is easy to find the mean number of particles for a given state with
help of~\eqref{eq:44}.  Here, we consider two examples: the pure state
of exactly $n$ particles and a coherent state with given mean number
of particles $\bar{n}$.

\begin{example}
  If the system is in the pure state of $n$ particles, then the mean
  number of particles is simply
  \begin{equation}
    \label{eq:37}
    \left<N(t)\right> 
    = n e^{-\Gamma t}\,.
  \end{equation}
  Thus we get the exponential decay law again.  Time evolution of the
  probability of finding exactly $k$ particles follows
  from~\eqref{eq:38} and reads
  \begin{equation}
    \label{eq:45}
    p_n(k, t) 
    = \binom{n}{k} e^{-k \Gamma t} \left(1 - e^{-\Gamma t}\right)^{n-k}\,,
  \end{equation}
  i.e., it is given by the binomial distribution $B(n,e^{-\Gamma t})$.
\end{example}

\begin{example}
  Let us assume that the system is in a coherent state $\ket{\alpha}$,
  \begin{displaymath}
    a \ket{\alpha} = \alpha \ket{\alpha}\,,
  \end{displaymath}
  $\alpha \in \setC$, i.e.
  \begin{equation}
    \label{eq:39}
    \ket{\alpha} = e^{-\frac{|\alpha|^2}{2}} \sum_{k=0}^\infty 
    \frac{\alpha^k}{\sqrt{k!}} \ket{k}\,,
  \end{equation}
  then
  \begin{equation}
    \label{eq:40}
    \left<N(t)\right> 
    = \bar{n} e^{-\Gamma t}\,,
  \end{equation}
  where $\bar{n} \equiv |\alpha|^2$ is the mean number of particles in
  the coherent state $\ket{\alpha}$.  Probability of finding exactly $k$
  particles evolves in time according to
  \begin{equation}
    \label{eq:46}
    p_{\bar{n}}(k, t) 
    = \frac{1}{k!} \left(\bar{n} e^{-\Gamma t}\right)^k e^{-\bar{n} e^{-\Gamma t}}\,,
  \end{equation}
  which is the Poisson distribution $P(\bar{n} e^{-\Gamma t})$.

  Let us note that, if we consider the state being a mixture of
  $k$-particle states with probability that $k$-particle state occurs
  given by the Poisson distribution with mean number of particles
  $\bar{n}$, i.e.,
  \begin{equation}
    \label{eq:50}
    \rho = \sum_{k=0}^\infty \frac{e^{-\bar{n}} \bar{n}^k}{k!} 
    \ket{k} \bra{k}\,,
  \end{equation}
  then the mean number of particles in this state and probability of
  finding exactly $k$ particles are given by the
  formulae~\eqref{eq:40} and~\eqref{eq:46}, respectively (despite the
  fact, that in this case we must find the traces of the product of
  observables with the density operator).
\end{example}

\section{Particles of Different Types}
\label{sec:many-types}
\noindent
Let us consider a system of particles of $r$ different types (or
carrying a quantum number with $r$ possible values), each type with
mass $m_j$ and width $\Gamma_j$ for $j = 1,\ldots, r$.  For such a
system we have
\begin{equation}
  \label{eq:60}
  [\hat{a}_j, \hat{a}_k]_\mp = 0\,, \quad
  [\hat{a}_j, \hat{a}_k^\dagger]_\mp = \delta_{jk}\,,
\end{equation}
for $j, k = 1,\ldots, r$, where $[\cdot,\cdot]_\mp$ denotes
commutator/anti-commutator, respectively, and anti-commutators apply
only if both $j$\textsuperscript{th}- and
$k$\textsuperscript{th}-particles are fermions.  The states spanning
the occupation number representation are generated from the vacuum
state via the formula
\begin{equation}
  \label{eq:61}
  \ket{n_1, n_2,\ldots, n_r} = \frac{(\hat{a}_1^\dagger)^{n_1} 
    (\hat{a}_2^\dagger)^{n_2} \cdots (\hat{a}_r^\dagger)^{n_r}}
  {\sqrt{n_1! n_2! \cdots n_r!}} \ket{0}
\end{equation}
where we identify $\ket{0,0,\ldots, 0} \equiv \ket{0}$.

The master equation for the system takes the following forms
\begin{subequations}
  \label{eq:28}
  \begin{align}
    \frac{d}{dt} \Lambda_t^* \rho &= -i [\hat{H}, \Lambda_t^* \rho] 
    + \{\hat{K}, \Lambda_t^* \rho\}
    + \sum_{j=1}^r \hat{L}_j (\Lambda_t^* \rho) \hat{L}_j^\dagger\,, \\
    \label{eq:28a}
    \frac{d}{dt} \Lambda_t \Omega &= i [\hat{H}, \Lambda_t \Omega]
    + \sum_{j=1}^r \left\{[\hat{L}_j^\dagger, \Lambda_t \Omega]
      \hat{L}_j + \hat{L}_j^\dagger [\Lambda_t \Omega,
      \hat{L}_j]\right\}\,,
  \end{align}
\end{subequations}
in the Schr\"odinger and Heisenberg picture, respectively, where
$\hat{H}$ is the Hamiltonian of the system and
\begin{equation}
  \label{eq:29}
  \hat{L}_j = \sqrt{\Gamma_j} \hat{a}_j\,, \quad
  \hat{K} = -\frac{1}{2} \sum_{j=1}^r \hat{L}_j^\dagger L_j\,, \quad
  \hat{M} = \hat{H} + i \hat{K}\,.
\end{equation}
If $[\hat{M}, \hat{a}_j] = -(m_j - \frac{i}{2} \Gamma_j) \hat{a}_j$
for $j = 1,\ldots, r$, then we can easily construct the Kraus operators
solving~\eqref{eq:28}
\begin{equation}
  \label{eq:30}
  E_k(t) = e^{-i \hat{M} t}
  \prod_{\substack{k_1,\ldots, k_r\\k_1 + \cdots + k_r = k}} 
  \frac{\left(\sqrt{1 - e^{-\Gamma_j t}} 
      \hat{a}_j\right)^{k_j}}{\sqrt{k_j!}} \,,
\end{equation}
where the product is taken over all possible partitions of $k$ into
exactly $r$ addends, such that $k_1 + k_2 + \cdots + k_r = k$, where
\begin{subequations}
  \label{eq:31}
  \begin{align}
    \label{eq:31a}
    k_j &\in \setN_0\,, & & j^{\text{th}}\text{-particles are
      bosons}\,, \\
    k_j &\in \{0, 1\}\,, & & j^{\text{th}}\text{-particle is a
      fermion}\,.
  \end{align}
\end{subequations}
for $j = 1,\ldots, r$.

\begin{example}
  Let us consider the evolution of a system of two flavour particles
  (e.g., particles and their anti-particles).  We denote the creation
  operators for these particles by $\hat{a}_1^\dagger$ and
  $\hat{a}_2^\dagger$.  The basis for the system is built up from the
  states of the form
  \begin{equation}
    \label{eq:33}
    \ket{n_1,n_2} = \frac{(\hat{a}_1^\dagger)^{n_1} (\hat{a}_2^\dagger)^{n_2}} 
    {\sqrt{n_1! n_2!}} \ket{0}\,.
  \end{equation}
  Let these states be the common eigenstates of two observables,
  $\hat{N} = \hat{a}_1^\dagger \hat{a}_1 + \hat{a}_2^\dagger \hat{a}_2$
  (number of particles) and $\hat{S} = \hat{a}_1^\dagger \hat{a}_1 -
  \hat{a}_2^\dagger \hat{a}_2$ (say strangeness or lepton number), i.e.,
  \begin{align*}
    \hat{N} \ket{n_1,n_2} &= (n_1 + n_2) \ket{n_1,n_2}\,, \\
    \hat{S} \ket{n_1,n_2} &= (n_1 - n_2) \ket{n_1,n_2}\,.
  \end{align*}
  If the states~\eqref{eq:33} are not eigenstates of the time evolution
  the phenomenon known as the flavour oscillation may occur.

  To describe such a situation, let us assume that the Hamiltonian and
  Lindblad operators for the system are of the form
  \begin{subequations}
    \label{eq:43}
    \begin{align}
      \hat{H} &= m_1 \hat{c}_1^\dagger \hat{c}_1
      + m_2 \hat{c}_2^\dagger \hat{c}_2\,, \\
      \hat{L}_1 &= \sqrt{\Gamma_1} \hat{c}_1\,, \\
      \hat{L}_2 &= \sqrt{\Gamma_2} \hat{c}_2\,,
    \end{align}
  \end{subequations}
  where $\hat{c}_1^\dagger, \hat{c}_2^\dagger$ are connected with
  $\hat{a}_1^\dagger, \hat{a}_2^\dagger$ by unitary transformation:
  \begin{subequations}
    \label{eq:35}
    \begin{align}
      \hat{c}_1^\dagger &= e^{i \chi} \left(e^{i (\phi + \psi)/2}
        \cos\frac{\theta}{2} \hat{a}_1^\dagger + e^{-i (\phi - \psi)/2}
        \sin\frac{\theta}{2} \hat{a}_2^\dagger\right)\,, \\
      \hat{c}_2^\dagger &= e^{i \chi} \left(-e^{i (\phi - \psi)/2}
        \sin\frac{\theta}{2} \hat{a}_1^\dagger + e^{-i (\phi + \psi)/2}
        \cos\frac{\theta}{2} \hat{a}_2^\dagger\right)\,.
    \end{align}
  \end{subequations}

  Since $\hat{M} = \left(m_1 - \frac{i}{2} \Gamma_1\right)
  \hat{c}_1^\dagger \hat{c}_1 + \left(m_2 - \frac{i}{2}
    \Gamma_2\right) \hat{c}_2^\dagger \hat{c}_2$, we can easily find
  the evolution of $\hat{c}_j$:
  \begin{equation}
    \label{eq:36}
    \Lambda_t \hat{c}_j = e^{-\left(i m_j + \frac{1}{2} \Gamma_j\right) t} 
    \hat{c}_j\,, \quad j = 1, 2\,.
  \end{equation}
  Using~\eqref{eq:35} we get the evolution of $\hat{a}_j$:
  \begin{subequations}
    \label{eq:41}
    \begin{align}
      \Lambda_t \hat{a}_1 &= \frac{1}{2} e^{-\left(i m_1 + \frac{1}{2}
          \Gamma_1\right) t} \left[\hat{a}_1 (1 + \cos\theta) +
        \hat{a}_2 e^{i \phi} \sin\theta\right] \notag \\
      &\quad + \frac{1}{2} e^{-\left(i m_2 + \frac{1}{2}
          \Gamma_2\right) t} \left[\hat{a}_1 (1 - \cos\theta) -
        \hat{a}_2 e^{i \phi} \sin\theta\right]\,,\\
      \Lambda_t \hat{a}_2 &= \frac{1}{2} e^{-\left(i m_1 + \frac{1}{2}
          \Gamma_1\right) t} \left[\hat{a}_2 (1 + \cos\theta) -
        \hat{a}_1 e^{-i \phi} \sin\theta\right] \notag \\
      &\quad + \frac{1}{2} e^{-\left(i m_2 + \frac{1}{2}
          \Gamma_2\right) t} \left[\hat{a}_2 (1 - \cos\theta) +
        \hat{a}_1 e^{-i \phi} \sin\theta\right]\, 
    \end{align}
  \end{subequations}
  The time evolution of the observables can be obtained either by
  solving \eqref{eq:28a} or directly from relations~\eqref{eq:41},
  using argumentation analogous to~\eqref{eq:48}.  For example, for
  the number of particles we get
  \begin{equation}
    \label{eq:42}
    \Lambda_t \hat{N} = \frac{e^{-\Gamma_1 t} + e^{-\Gamma_2 t}}{2}
    \hat{N} 
    + \frac{e^{-\Gamma_1 t} - e^{-\Gamma_2 t}}{2} [\hat{S} \cos\theta
    + \hat{Q}_+ \sin\theta]\,,
  \end{equation}
  where $\hat{Q}_+ = \hat{a}_1^\dagger \hat{a}_2 e^{i \phi} +
  \hat{a}_2^\dagger \hat{a}_1 e^{-i \phi}$, so the mean value in the
  state $\ket{n_1, n_2}$ is the following
  \begin{equation}
    \label{eq:49}
    \left<N(t)\right> = \frac{e^{-\Gamma_1 t} + e^{-\Gamma_2 t}}{2}
    (n_1 + n_2) 
    + \frac{e^{-\Gamma_1 t} - e^{-\Gamma_2 t}}{2} (n_1 - n_2)
    \cos\theta
  \end{equation}
  and is depicted in the Fig.~\ref{fig:mixing-N}.
  \begin{figure}
    \centering
    \includegraphics[width=\columnwidth]{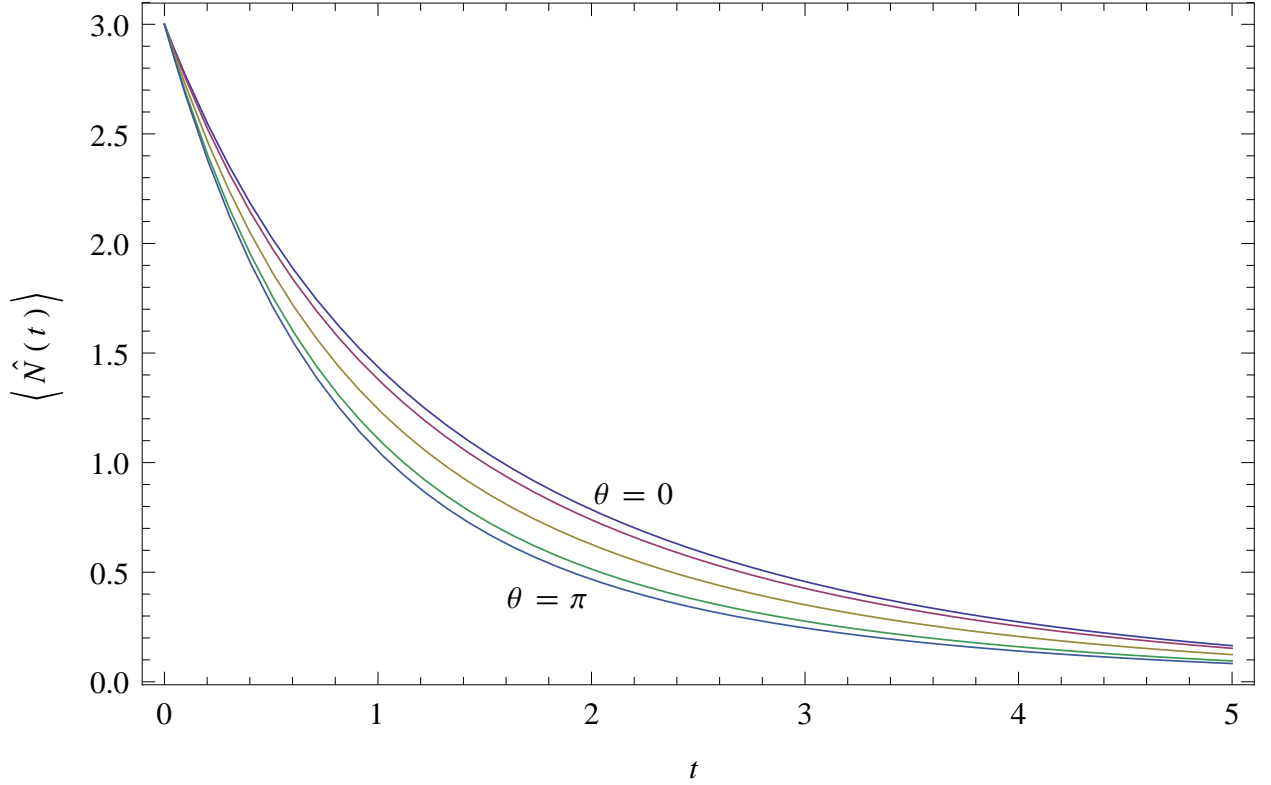}
    \caption{Number of particles for system with two flavours for mixing
      angles $\theta = 0, \frac{\pi}{4}, \frac{\pi}{2}, \frac{3\pi}{4},
      \pi$ (from right to left) with $n_1 = 2$ and $n_2 = 1$, and
      $\Gamma_1 < \Gamma_2$ (time unit is $\tau = 1/\Gamma$).}
    \label{fig:mixing-N}
  \end{figure}

  Similarly, for the strangeness (or lepton number) we get
  \begin{multline}
    \label{eq:47}
    \Lambda_t \hat{S} = \frac{e^{-\Gamma_1 t} - e^{-\Gamma_2 t}}{2}
    \hat{N} \cos\theta + e^{-\Gamma t} \sin(\Delta m t) \hat{Q}_-
    \sin\theta\\
    + \left[\frac{e^{-\Gamma_1 t} + e^{-\Gamma_2 t}}{2} \cos^2\theta +
      e^{-\Gamma t} \cos(\Delta m t) \sin^2\theta\right] \hat{S}\\
    + \left[\frac{e^{-\Gamma_1 t} + e^{-\Gamma_2 t}}{2} - e^{-\Gamma
        t} \cos(\Delta m t)\right] \hat{Q}_+ \sin\theta \cos\theta\,,
  \end{multline}
  where $\hat{Q}_- = i (\hat{a}_1^\dagger \hat{a}_2 e^{i \phi} -
  \hat{a}_2^\dagger \hat{a}_1 e^{-i \phi})$, $\Gamma = \frac{1}{2}
  (\Gamma_1 + \Gamma_2)$ and $\Delta m = m_2 - m_1$.  The mean value
  of this observable in the state $\ket{n_1, n_2}$ is
  \begin{multline}
    \label{eq:51}
    \left<S(t)\right> = \frac{e^{-\Gamma_1 t} - e^{-\Gamma_2 t}}{2}
      (n_1 + n_2) \cos\theta\\
      + \left[\frac{e^{-\Gamma_1 t} + e^{-\Gamma_2 t}}{2} \cos^2\theta
          + e^{-\Gamma t} \cos(\Delta m t) \sin^2\theta\right] (n_1 -
        n_2)
  \end{multline}
  and is shown in the Fig.~\ref{fig:mixing-S}.
  \begin{figure}
    \centering
    \includegraphics[width=\columnwidth]{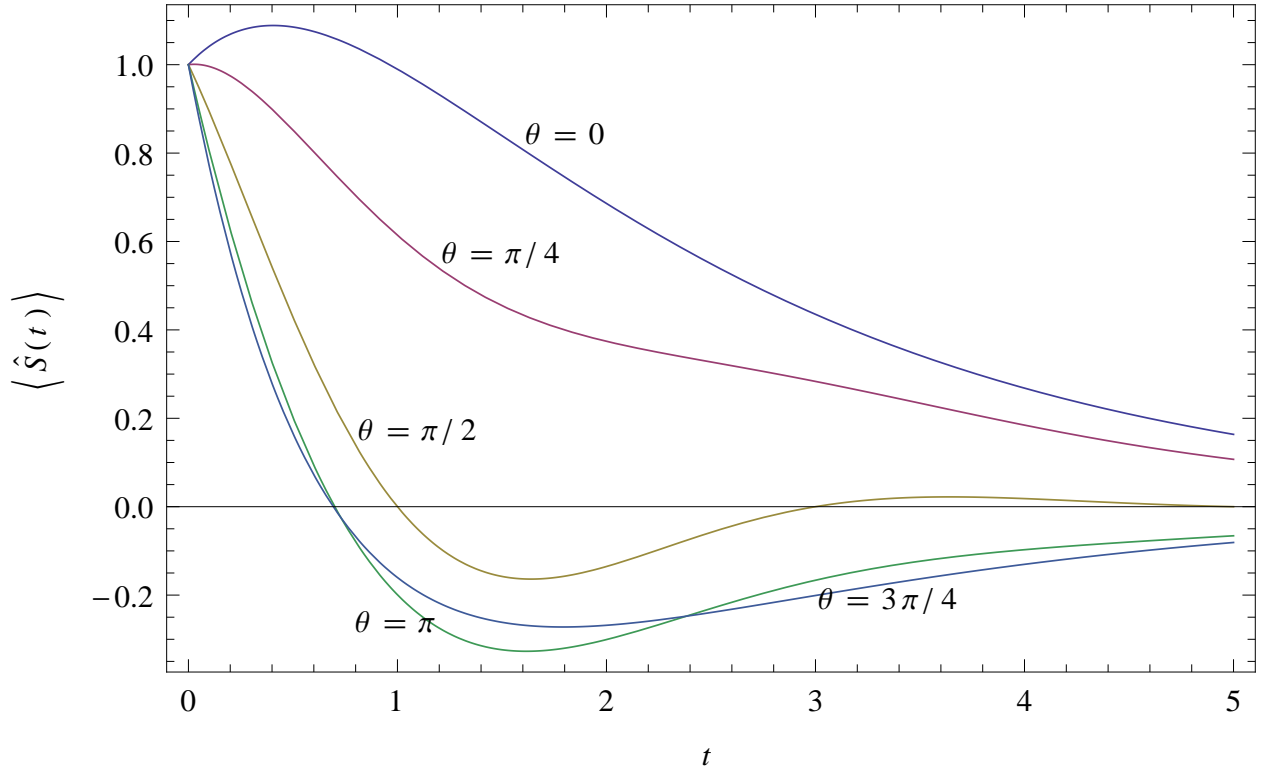}
    \caption{Strangeness of system with two flavours for different values
      of mixing angle with $n_1 = 2$ and $n_2 = 1$, and $\Gamma_1 <
      \Gamma_2$ (time unit is $\tau = 1/\Gamma$).}
    \label{fig:mixing-S}
  \end{figure}

  Let us pay our attention on the two extreme cases: $\theta = 0$ (no
  flavour mixing) and $\theta = \frac{\pi}{2}$ (maximal mixing).  For
  $\theta = 0$ we have $\hat{c}_i = \hat{a}_i$, so the time evolution of
  the observables is
  \begin{subequations}
    \label{eq:52}
    \begin{align}
      \Lambda_t\hat{N} &= \frac{1}{2} e^{-\Gamma_1 t} (\hat{N} + \hat{S}) 
      + \frac{1}{2} e^{-\Gamma_2 t} (\hat{N} - \hat{S})\,, \\
      \Lambda_t\hat{S} &= \frac{1}{2} e^{-\Gamma_1 t} (\hat{S} + \hat{N}) 
      + \frac{1}{2} e^{-\Gamma_2 t} (\hat{S} - \hat{N})\,.
    \end{align}
  \end{subequations}
  Their mean values in the state $\ket{n_1, n_2}$ are
  \begin{subequations}
    \label{eq:53}
    \begin{align}
      \left<N(t)\right> &=  e^{-\Gamma_1 t} n_1 + e^{-\Gamma_2 t} n_2\,, \\
      \left<S(t)\right> &=  e^{-\Gamma_1 t} n_1 - e^{-\Gamma_2 t} n_2\,.
    \end{align}
  \end{subequations}
  For $\theta = \frac{\pi}{2}$, $\phi = 2 \pi$, $\psi = \pi$ and $\chi
  = \frac{3 \pi}{2}$ we have
  \begin{align*}
    \hat{c}_1^\dagger &= \frac{1}{\sqrt{2}} (\hat{a}_1^\dagger +
    \hat{a}_2^\dagger)\,, \\
    \hat{c}_2^\dagger &= \frac{1}{\sqrt{2}} (\hat{a}_1^\dagger -
    \hat{a}_2^\dagger)\,,
  \end{align*}
  and the time evolution of the observables is given by
  \begin{subequations}
    \label{eq:55}
    \begin{align}
      \Lambda_t\hat{N} &= \frac{1}{2} \left(e^{-\Gamma_1 t} +
        e^{-\Gamma_2 t}\right) \hat{N} + \frac{1}{2}
      \left(e^{-\Gamma_1 t} - e^{-\Gamma_2 t}\right) \hat{Q}_+\,, \\
      \Lambda_t\hat{S} &= e^{-\Gamma t} \cos(\Delta m t) \hat{S} +
      e^{-\Gamma t} \sin(\Delta m t) \hat{Q}_-\,.
    \end{align}
  \end{subequations}
  The mean values of these observables in the state $\ket{n_1, n_2}$
  are
  \begin{subequations}
    \label{eq:54}
    \begin{align}
      \left<N(t)\right> &= \frac{1}{2} \left(e^{-\Gamma_1 t} +
        e^{-\Gamma_2 t}\right) (n_1 + n_2)\,,\\
      \left<S(t)\right> &= e^{-\Gamma t} \cos(\Delta m t) (n_1 - n_2)\,,
    \end{align}
  \end{subequations}
  so we get oscillations of the quantum number $S$.  It is worth
  noting, that for the particles such as $K$ or $B$ mesons we can use
  this result only as a first approximation, since for these particles
  the transformation which ``diagonalizes'' the master equation is
  non-unitary due to $CP$-violation.  For the sake of brevity, we do
  not discuss the violated $CP$-symmetry here but preliminary
  calculations show the agreement with values for masses and
  life-times of neutral $K$ or $B$ mesons estimated on the basis of
  traditional Wigner--Weisskopf approach.
\end{example}

\section{Conclusions}
\label{sec:concl}
\noindent
We have analyzed a class of master equations built up from creation
and annihilation operators which generate dynamical semigroups that
can describe the exponential decay and flavour oscillations for system
of many particles.  We have shown, in this case, how this dynamical
semigroup can be written in the Schr\"odinger as well as Heisenberg
picture.  This allowed us to choose the picture which seems to be more
convenient for the description of the system under consideration.
Moreover, we have found the solution for a free particle master
equation in the form of Kraus representation in the language of
annihilation and creation operators.  Although, this Kraus
representation is given by an infinite series, in the Schr\"odinger
picture it reduces to a finite sum, whenever the initial state has a
finite number of particles.  On the other hand, in the Heisenberg
picture the commutation relations between observables and Kraus
operators sometimes allows us to find the observable evolution in
closed form without explicit summation of the series.

Notice that if we cut the presented approach to the one-zero particle
sector we get the theory given in \cite{caban05a} (neglecting the
decoherence).

In the present paper we restrict our analysis only to states labeled
by a discrete index, and not by continuous parameter (like e.g.\
momentum).  Despite the fact that introducing a continuous parameter
causes creation and annihilation operators to become operator-valued
distributions, it seems to us that the approach introduced here should
also be applicable.

We left open the question whether it is possible to apply our approach
to describe the processes other than exponential decay, like e.g.\
decoherence or different decay laws.  The preliminary investigations
suggest that there exists a positive answer.

\section*{Acknowledgments}
\noindent
This work was supported by the Polish Ministry of Science and Higher
Education under Contract No.\ NN202~103738.

\bibliography{unstable2Notes}
\end{document}